\newcommand{\ekmin}{\emph{$\epsilon$-k-min-wise }}
\newcommand{\edkmin}{\emph{$\epsilon$-d-k-min-wise }}
\newcommand{\kmin}{\emph{k-min-wise }}
\newcommand{\dkmin}{\emph{d-k-min-wise }}
\newcommand{\minwise}{\emph{min-wise }}
\newcommand{\eminwise}{\emph{$\epsilon$-min-wise }}
\numberwithin{lemma}{section}
\numberwithin{theorem}{section}
\numberwithin{table}{section}
\numberwithin{definition}{section}
\begin{document}

\title{Even Better Framework for \minwise Based Algorithms}

\author{Guy Feigenblat, Ely Porat and Ariel Shiftan}

\institute{Department of Computer Science, Bar-Ilan University, Ramat Gan 52900, Israel \\ \email{\{feigeng, porately, shiftaa\}@cs.biu.ac.il}
}

\maketitle


\begin{abstract}
In a recent paper from SODA11 \cite{kminwise} the authors introduced a general framework for exponential time improvement of \minwise based algorithms by defining and constructing almost \kmin independent family of hash functions. Here we take it a step forward and reduce the space and the independent needed for representing the functions, by defining and constructing a \dkmin independent family of hash functions. Surprisingly, for most cases only $8$-wise independent is needed for exponential time and space improvement.
Moreover, we bypass the $O(\log{\frac{1}{\epsilon}})$ independent lower bound for approximately \minwise functions \cite{patrascu10kwise-lb}, as we use alternative definition.
In addition, as the independent's degree is a small constant it can be implemented efficiently.

Informally, under this definition, all subsets of size $d$ of any fixed set $X$ have an equal probability to have hash values among the minimal $k$ values in $X$,
where the probability is over the random choice of hash function from the family.
This property measures the randomness of the family, as choosing a truly random function, obviously, satisfies the definition for $d=k=|X|$.
We define and give an efficient time and space construction of approximately \dkmin independent family of hash functions.
The degree of independent required is optimal, i.e. only $O( d )$ for $2 \le d < k=O(\frac{d}{\epsilon^2})$, where $\epsilon \in (0,1)$ is the desired error bound.
This construction can be used to improve many \minwise based algorithms,
such as \cite{sizeEstimationFramework,Datar02estimatingrarity,NearDuplicate,SimilaritySearch,DBLP:conf/podc/CohenK07,DBLP:journals/pvldb/CohenK08,DBLP:conf/spire/BachrachHP09,Bachrach:2009:IJCAI,1242592,1115528,872790,1242610,378687,511502,1148243,1148222}, as will be discussed here.
To our knowledge such definitions, for hash functions, were never studied and no construction was given before.
\end{abstract}

\section{Introduction}
Hash functions are fundamental building blocks of many algorithms. They map values from one domain to another, usually smaller. Although they have been studied for many years, designing hash functions is still a hot topic in modern research. In a perfect world we could use a truly random hash function, one that would be chosen randomly out of all the possible mappings.

Specifically, consider the domain of all hash functions $h: N \rightarrow M $, where $|N|=n$ and $|M|=m$. As we need to map each of the $n$ elements in the source into one of the $m$ possible mappings, the number of bits needed to maintain each function is $n \log m$. Since nowadays we often have massive amount of data to process, this amount of space is not feasible. Nevertheless, most algorithms do not really need such a high level of randomness, and can perform well enough with some relaxations. In such cases one can use a much smaller domain of hash functions. A smaller domain implies lesser space requirement at the price of a lower level of randomness.

As an illustrative example, the notion of \emph{2-wise-independent} family of hash functions assures the independence of each pair of elements. It is known that only $2\log m$ bits are enough in order to choose and maintain such a function out of the family.

This work is focused on the area of min-hashing. One derivative of min-hashing is \minwise independent permutations, which were first introduced in \cite{Mulmuley92,MinwiseIndependentPermutations}. A family of {\bf permutations} $F \in S_{n} $ (where $S_{n}$ the symmetric group) is {\bf min-wise independent} if for any set $ X \subseteq [n]$ (where $[n]=\{0,\dots,n-1\} $) and any $x \in X$, where $\pi $ is chosen uniformly at random in $F$, we have:
\begin{equation}
 Pr  [\min \{\pi(X)\} = \pi(x)] = \frac{1}{|X|}  \nonumber
\end{equation}

Similarly, a family of {\bf functions} $\mathcal{H} \in [n] \rightarrow [n] $ (where $[n]=\{0,\dots,n-1\} $) is called {\bf min-wise independent} if for any $ X \subseteq [n]$, and for any $x \in X$, where $h$ is chosen uniformly at random in $\mathcal{H}$, we have:
\begin{equation}
 Pr _{h \in \mathcal{H}} [\min \{h(X)\} = h(x)] = \frac{1}{|X|}  \nonumber
\end{equation}

Min hashing is a widely used tool for solving problems in computer science such as estimating similarity \cite{MinwiseIndependentPermutations,Broder97onthe,SyntacticClustering}, rarity \cite{Datar02estimatingrarity},  transitive closure \cite{sizeEstimationFramework}, web page duplicate detection \cite{NearDuplicate,1242592,1148243,1148222}, sketching techniques \cite{DBLP:journals/pvldb/CohenK08,DBLP:conf/podc/CohenK07}, and other data mining problems \cite{SimilaritySearch,1242610,511502,DBLP:conf/spire/BachrachHP09,Bachrach:2009:IJCAI}.

One of the key properties of min hashing is that it enables to sample the universe of the elements being hashed.
This is because each element, over the random choice of hash function out of the family, has equal probability of being mapped to the minimal value, regardless of the number of occurrences of the element. Thus, by maintaining the element with the minimal hash value over the input, one can sample the universe.

Similarity estimation of data sets is a fundamental tool in mining data. It is often calculated using the Jaccard similarity coefficient which is defined by $ \frac{ |A \cap B|}{|A \cup B|} $, where $A$ and $B$ are two data sets.
By maintaining the minimal hash value over two sets of data inputs $A$ and $B$, the probability of getting the same hash value is exactly $ \frac{ |A \cap B|}{|A \cup B|} $, which equals the Jaccard similarity coefficient, as described in \cite{MinwiseIndependentPermutations,Broder97onthe,SyntacticClustering,sizeEstimationFramework}. \\

Indyk in \cite{Indyk99asmall} was first to give a construction of a small approximately \minwise independent family of hash functions, another construction was proposed in \cite{Saks99lowdiscrepancy}. \\
A family of functions $ \mathcal{H} \subseteq [n] \rightarrow [n] $ is called {\bf approximately min-wise independent}, or \eminwise independent, if,  for any $X \subseteq  [n]$, and for any $x \in X$, where $h$ is chosen uniformly at random in $\mathcal{H}$, we have:
 \begin{equation}
 Pr _{h \in \mathcal{H}} [\min \{h(X)\} = h(x)] = \frac{1}{|X|} (1 \pm \epsilon) \nonumber
\end{equation}
where $\epsilon \in (0,1)$ is the desired error bound, and $O( \log (\frac{1}{\epsilon}) )$ independent is needed.
P{\v a}tra{\c s}cu and Thorup showed in \cite{patrascu10kwise-lb} that $O(\log{\frac{1}{\epsilon}})$ independent is needed for maintaining an approximately \minwise function, hence Indyk's construction is optimal,
and the minimal number of bits needed to represent each function is $O( \log n \log (\frac{1}{\epsilon}) )$. \\

In a previous paper \cite{kminwise} the authors defined and gave a construction for
{\bf approximately \emph{\kmin (\ekmin)} independent} family of hash functions: \\
A family of functions $\mathcal{H} \subseteq [n]\rightarrow [n]$ (where $[n] = \{ 0 \ldots n-1\}$) is called \emph{\ekmin}independent if for any $X\subseteq [n]$ and for any $Y\subset X$, $|Y| = k$ we have
\[
\Pr_{h\in \mathcal{H}}\left[ \max_{y \in Y} h(y) < \min_{z \in X-Y} h(z)\right] = \frac{1}{ {|X| \choose |Y|}}(1\pm \epsilon),
\]
where the function $h$ is chosen uniformly at random from $\mathcal{H}$, and $\epsilon \in(0,1)$ is the error bound. \\
It was also shown in \cite{kminwise} that choosing uniformly at random from $O( k \log\log\frac{1}{\epsilon} +  \log \frac{1}{\epsilon}  )$-wise independent family of hash functions is approximately \kmin independent.
In most applications, $k$ different approximately \minwise hash functions were used, and they proposed to replace them with only one approximately \kmin hash function.
As the $k$ elements are fully independent, the precision is preserved.
Furthermore, this reduces exponentially the running time and asymptotically the space of previous known results for \minwise based algorithms.

\subsection{Our Contribution}
In this paper we define and construct a small approximately \dkmin independent family of hash functions.
First, we extend the notion of \minwise independent family of hash functions by defining a \dkmin independent family of hash functions.
Then, we show a construction of an approximated such family.
Under this definition, all subsets of size $d$ of any fixed set $X$ have an equal probability to have hash values among the minimal $k$ values in $X$,
where the probability is over the random choice of hash function from the family. The formal definition is given in section 2.
The degree of independent and the space needed by our construction is $O(d)$,
for $2 \le d < k=O(\frac{d}{\epsilon^2})$, where $\epsilon \in (0,1)$ is the error bound. 
The construction is optimal, since by our definition the $k$ elements are $d$-wise independent.
In addition, the dependency on $d$ but not on $k$ is surprising, but the intuition behind that is the stability property of the $k$'th sized element, for large enough $k$.
Hence, the randomness needed is mainly for the independence of the $d$ elements.

We argue that for most applications it is sufficient to use constant $d=2$. This yields the need of only $8$-wise independent hash functions, which can be implemented efficiently. Our innovative approach bypasses the $O(\log{\frac{1}{\epsilon}})$ lower bound of approximately \minwise functions, as this constraint is not required by our definition.

To utilize our construction we propose a simple and even better general framework for exponential time improvement of \minwise based algorithms,
such as in \cite{sizeEstimationFramework,Datar02estimatingrarity,NearDuplicate,SimilaritySearch,DBLP:conf/podc/CohenK07,DBLP:journals/pvldb/CohenK08,DBLP:conf/spire/BachrachHP09,Bachrach:2009:IJCAI,1242592,1115528,872790,1242610,378687,511502,1148243,1148222}.
The common between these algorithms is that they use $c>1$ approximately \minwise hash functions in order to sample $c$ elements independently from the universe.
We propose to replace them with fewer (say $O(\log\frac{1}{\tau})$, for $\tau\in(0,1)$) approximately \dkmin (for $k<c$) independent functions, where each samples $k$ elements.
The $k$ elements are $d$-wise independent, therefore we can use Chebyshev's inequality to bound the precision. Specifically, pair-wise is sufficient for applying Chebyshev, and this is why $d=2$ is usually used.
By combining the $\log\frac{1}{\tau}$ samples using Chernoff bound, the precision becomes as desired.
The above procedure does not change the algorithm itself, but only the way it samples, hence it is simple to adapt.
We found this to improve exponentially the time and space complexity (as the space for each function is constant).\\





\subsection{Outline}
The outline of the paper is as follows.
In section 2 we define the notion of \dkmin and approximately \dkmin independent families. 
In section 3 we present a construction of such family.
Finally, few lemmata are left to the appendix.

\section{Definitions}
We will start by giving our definitions for exact and approximately \dkmin independent family of hash functions, which are generalization of \minwise independent family of hash functions.
\begin{definition}
For any set $X$ we define $MIN_k(X)$ to be the set of $k$ smallest elements in $X$.
\end{definition}
\begin{definition}
For any set $X$ we define $RANK_k(X)$ to be the $k$-th elements in $X$, where the elements are sorted by value.
\end{definition}
\begin{definition}
For any set $X$, and hash function $h$ we define $h(X)$ to be the set of all hash values of all elements in $X$.
\end{definition}

\begin{definition}
\emph{\dkmin}independent family of hash functions: \\
A family of functions $\mathcal{H} \subseteq [u]\rightarrow [u]$ (where $[u] = \{ 0 \ldots u-1\}$) is called \emph{\dkmin}independent if for any $X\subseteq [u]$, $|X|=n-d$, for any $Y\subseteq [u]$, $|Y| = d$, $X\bigcap Y=\emptyset$, $d<=k$, we have
\begin{equation}
\Pr_{h\in \mathcal{H}}\left[ RANK_d(h(Y)) <  RANK_{k-d+1}(h(X))\right] = \frac{{k \choose d}}{ {n \choose d}}  \nonumber
\end{equation}
Where the function $h$ is chosen uniformly at random from $\mathcal{H}$.
\end{definition}

\begin{definition}
Approximately \emph{\dkmin (\edkmin)} independent family of hash functions: \\
A family of functions $\mathcal{H} \subseteq [u]\rightarrow [u]$ (where $[u] = \{ 0 \ldots u-1\}$) is called approximately \emph{\dkmin}independent if for any $X\subseteq [u]$, $|X|=n-d$, for any $Y\subseteq [u]$, $|Y| = d$, $X\bigcap Y=\emptyset$, $d<=k$, we have
\begin{equation}
\Pr_{h\in \mathcal{H}}\left[ RANK_d(h(Y)) <  RANK_{k-d+1}(h(X))\right] = \frac{{k \choose d}}{ {n \choose d}} (1\pm \epsilon) \nonumber
\end{equation}
Where the function $h$ is chosen uniformly at random from $\mathcal{H}$, and $\epsilon \in(0,1)$ is the error bound.
\label{def2}
\end{definition}

\section{Construction}
In this section we provide a construction for approximately \dkmin independent family of hash functions.
Let $\Pr\left[\cdot\right]$ denote a fully random probability measure over $[u]\rightarrow [u]$, and let $\Pr_l\left[\cdot\right]$ denote any l-wise independent probability measure over the same domain. We divide the universe into $|\phi|$ non-overlapping blocks, which will be defined in the next section.

\begin{lemma}
\label{lemma1}
\begin{equation}
\label{a1}
\Pr \left[ h(y_1), h(y_2),\dots,h(y_d)  < RANK_{k-d+1}(h(X))\right] =  \frac{k}{n}\frac{k-1}{n-1}\dots\frac{k-d+1}{n-d+1} \nonumber
\end{equation}
\end{lemma}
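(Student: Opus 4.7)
The plan is to reinterpret the event as a statement about which $d$-subset lies among the $k$ smallest hash values of $h(X\cup Y)$, and then exploit full symmetry of the uniformly random hash function.

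First, I would argue the logical equivalence: the event $h(y_1),\ldots,h(y_d) < RANK_{k-d+1}(h(X))$ holds if and only if every element of $Y$ has its hash value among the $k$ smallest of $h(X\cup Y)$. One direction: if all hashes of $Y$ are strictly below the $(k-d+1)$-th smallest of $h(X)$, then at most $k-d$ elements of $X$ can beat any $h(y_i)$, so each $y_i$ is among the top $k$ of $X\cup Y$. Conversely, if all of $Y$ lies in the top $k$ of $X\cup Y$, then fewer than $k-d+1$ elements of $X$ can be smaller than $\max_i h(y_i)$, which forces $RANK_{k-d+1}(h(X)) > \max_i h(y_i)$. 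Since $|X|+|Y|=n$ and $X\cap Y=\emptyset$, this reformulation is clean.

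Next, under $\Pr[\cdot]$ (fully independent hashing), the hash values of the $n$ elements in $X\cup Y$ are i.i.d., so by symmetry every one of the $\binom{n}{k}$ subsets of $X\cup Y$ is equally likely to be the set of top-$k$ smallest hashes (modulo ties, which I would handle either by assuming $u$ is large enough that collisions are negligible, or by a standard tie-breaking rule that preserves symmetry). Counting the subsets of size $k$ that contain the fixed $d$-set $Y$ gives $\binom{n-d}{k-d}$, hence
\begin{equation*}
\Pr\left[h(y_1),\ldots,h(y_d) < RANK_{k-d+1}(h(X))\right] = \frac{\binom{n-d}{k-d}}{\binom{n}{k}}.
\end{equation*}

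A short algebraic simplification finishes the proof:
\begin{equation*}
\frac{\binom{n-d}{k-d}}{\binom{n}{k}} = \frac{k!\,(n-d)!}{(k-d)!\,n!} = \frac{k(k-1)\cdots(k-d+1)}{n(n-1)\cdots(n-d+1)},
\end{equation*}
matching the right-hand side of the lemma. I do not expect any real obstacle here; the only subtle point is justifying the ``all orderings equally likely'' step cleanly in the presence of possible collisions, but this is a standard technicality and not the heart of the argument. The genuine work of the paper will come later, when $\Pr$ is replaced by a limited-independence measure $\Pr_l$ and one has to show the probability only deviates by a $(1\pm\epsilon)$ factor.
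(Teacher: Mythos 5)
Your proposal is correct and follows essentially the same symmetry-plus-counting argument as the paper: under a fully random hash function all rank configurations of the $n$ elements are equally likely, and one counts the favorable configurations. The paper counts the $\binom{k}{d}$ favorable placements of the $d$ elements out of $\binom{n}{d}$, while you count the $\binom{n-d}{k-d}$ favorable top-$k$ subsets out of $\binom{n}{k}$ --- a dual counting of the same event yielding the identical ratio; your extra care about the event equivalence and ties is a welcome but minor addition.
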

\begin{proof}
Consider $n$ ordered elements divided into two groups --- one of size $n-d$, and the other of size $d$. The number of possible locations of the $d$ elements is ${n \choose d}$. There are ${k \choose d}$ possible locations in which the $d$ elements are among the $k$ smallest elements. Hence, the probability for the $d$ element to be among the $k$'th smallest elements is:
\begin{equation}
\label{a1}
\Pr \left[ h(y_1), h(y_2),\dots,h(y_d)  < RANK_{k-d+1}(h(X))\right] =
\frac{{k \choose d}}{{n \choose d}} =  \frac{k}{n}\frac{k-1}{n-1}\dots\frac{k-d+1}{n-d+1} \nonumber
\end{equation}
\qed
\end{proof}

Since the blocks in $\phi$ are non-overlapping \\ $\sum_{i\in\phi} \Pr_l\left[RANK_{k-d+1}(h(X)) \in b_i\right]=1$, using lemma \ref{lemma1} we get
\begin{equation}
\label{a1}
\Pr \left[ h(y_1), h(y_2),\dots,h(y_d)  < RANK_{k-d+1}(h(X))\right] = \nonumber
\end{equation}
\begin{equation}
\frac{k}{n}\frac{k-1}{n-1}\dots\frac{k-d+1}{n-d+1} \sum_{i\in\phi} \Pr_l\left[RANK_{k-d+1}(h(X)) \in b_i\right] \nonumber
\end{equation}

\begin{lemma}
\label{lemma2}
Let $\Delta=$
\begin{equation}
\sum_{i\in\phi} \Pr_l\left[ RANK_{k-d+1}(h(X)) \in b_i\right] \times \nonumber
\end{equation}
\begin{equation}
\left[ \Pr_l\left[ h(y_1), \dots,h(y_d) \le RANK_{k-d+1}(h(X)) \ | \ RANK_{k-d+1}(h(X)) \in b_i  \right] - \frac{k}{n}\frac{k-1}{n-1}\dots\frac{k-d+1}{n-d+1} \right] \nonumber
\end{equation}
Any family of $l$-wise independent is approximately \emph{\dkmin}independent if
\begin{equation}
 -\epsilon\frac{{k \choose d}}{{n \choose d}} \le \Delta \le \epsilon\frac{{k \choose d}}{{n \choose d}} \nonumber
\end{equation}
\end{lemma}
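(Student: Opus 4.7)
The plan is to unfold the target probability via the law of total probability over the partition $\phi$, and then match the resulting expression against the definition of $\Delta$. First, I would observe that the event $RANK_d(h(Y)) < RANK_{k-d+1}(h(X))$ is precisely the event that all $d$ hash values $h(y_1),\dots,h(y_d)$ lie strictly below $RANK_{k-d+1}(h(X))$, and under reasonable tie-breaking (which we can assume since the universe is large enough to make collisions negligible, or else handle them with a standard injectivity argument) this coincides with the ``$\le$'' form appearing inside $\Delta$. With that identification I can condition on which block $b_i$ contains $RANK_{k-d+1}(h(X))$, yielding
\begin{equation}
\Pr_l\!\left[ RANK_d(h(Y)) < RANK_{k-d+1}(h(X))\right] = \sum_{i\in\phi} \Pr_l\!\left[RANK_{k-d+1}(h(X))\in b_i\right] \cdot p_i, \nonumber
\end{equation}
where $p_i = \Pr_l[h(y_1),\dots,h(y_d)\le RANK_{k-d+1}(h(X)) \mid RANK_{k-d+1}(h(X))\in b_i]$.

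Next I would add and subtract the ``ideal'' factor $\tfrac{k}{n}\tfrac{k-1}{n-1}\cdots\tfrac{k-d+1}{n-d+1} = \binom{k}{d}/\binom{n}{d}$ from each $p_i$. Because the blocks in $\phi$ are non-overlapping and cover the support, $\sum_{i\in\phi}\Pr_l[RANK_{k-d+1}(h(X))\in b_i] = 1$, so the subtracted part collapses to exactly $\binom{k}{d}/\binom{n}{d}$, and the remaining part is the expression defined to be $\Delta$. Thus
\begin{equation}
\Pr_l\!\left[ RANK_d(h(Y)) < RANK_{k-d+1}(h(X))\right] = \frac{\binom{k}{d}}{\binom{n}{d}} + \Delta. \nonumber
\end{equation}

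Applying the hypothesis $|\Delta|\le \epsilon\binom{k}{d}/\binom{n}{d}$ immediately gives the probability equal to $\binom{k}{d}/\binom{n}{d}\,(1\pm\epsilon)$, matching Definition~\ref{def2}. The argument is essentially bookkeeping; the only genuinely delicate points are (i) verifying that the law of total probability is applied legally under the $l$-wise measure (which just requires that $\phi$ partitions the range of $RANK_{k-d+1}(h(X))$), and (ii) pinning down the strict-versus-weak inequality so that the event in the definition matches the event inside $\Delta$ — which I expect to be the main place to be careful, and which can be discharged by insisting on a fixed tie-breaking rule on $[u]$.
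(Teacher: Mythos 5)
Your proposal is correct and follows essentially the same route as the paper: decompose the probability via the law of total probability over the blocks $b_i$, use $\sum_{i\in\phi}\Pr_l\left[RANK_{k-d+1}(h(X))\in b_i\right]=1$ to identify the difference from $\frac{{k \choose d}}{{n \choose d}}$ with $\Delta$, and invoke Definition~\ref{def2}. Your extra care about the strict-versus-weak inequality is a point the paper silently glosses over (it writes $<$ in the total-probability expansion but $\le$ inside $\Delta$), but it does not change the argument.
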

\begin{proof}
Based on the complete probability formula, in the l-wise independent case
\begin{equation}
\label{a2}
\Pr_l \left[ h(y_1), h(y_2),\dots,h(y_d) < RANK_{k-d+1}(h(X))\right] = \nonumber
\end{equation}
\begin{equation}
\sum_{i\in\phi} \Pr_l\left[ RANK_{k-d+1}(h(X)) \in b_i\right]  \cdot \nonumber
\end{equation}
\begin{equation}
 \Pr_l\left[ h(y_1), h(y_2),\dots,h(y_d) < RANK_{k-d+1}(h(X)) \ | \ RANK_{k-d+1}(h(X)) \in b_i  \right] \nonumber
\end{equation}
By definition, any family of $l$-wise independent is approximately \dkmin independent if
\begin{equation}
\Pr_l \left[ h(y_1), h(y_2),\dots,h(y_d) < RANK_{k-d+1}(h(X))\right] = \frac{{k \choose d}}{ {n \choose d}} (1\pm \epsilon) \nonumber
\end{equation}
which is satisfied if
\begin{equation}
 -\epsilon\frac{{k \choose d}}{{n \choose d}} \le \Delta \le \epsilon\frac{{k \choose d}}{{n \choose d}} \nonumber
\end{equation}
\qed
\end{proof}

\subsection{Blocks partitioning}
We divide the universe $\left[0,|U|\right]$ into non-overlapping blocks. Let $t=k-d+1$ and $m = n-d = |X|$, we construct the blocks around $\frac{t|U|}{m}$, as follows:
$b_i = \left[ (1+\epsilon (i-1)) \frac{t|U|}{m}, (1+\epsilon i) \frac{t|U|}{m} \right)$ \\
We refer to blocks $b_i$ for $i > 0$ as 'positive blocks' and 'negative blocks' otherwise ($i \le 0$).
For the rest of the paper, we ignore blocks which are outside $\left[0,|U|\right]$.

\subsection{Bounding $\Pr_l\left[ RANK_{t}(h(X)) \in b_{i}\right]$}

\begin{lemma}
\label{lemma3}
For $i>0$, $d>0$, $\epsilon \in (0,1)$, $k>d-1 + 2\cdot8^{\frac{2}{l}}\frac{(6l)^{1+\frac{1}{l}}}{\epsilon^2}$ and $l\ge 2d+2$
\begin{equation}
\Pr_l\left[ RANK_{t}(h(X)) \in b_i\right] \le \frac{1}{i^{d+1}} \nonumber
\end{equation}

\end{lemma}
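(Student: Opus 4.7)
The plan is to reduce the rank event to a left-tail deviation of a sum of $l$-wise independent indicators, and then apply a higher-moment concentration inequality whose exponent matches the target $d+1$. First I would introduce
$$N_i := \bigl|\{x \in X : h(x) < (1+\epsilon(i-1))\, t|U|/m\}\bigr|,$$
the number of elements of $X$ hashing strictly below the left endpoint of $b_i$. If $RANK_t(h(X)) \in b_i$ then at most $t-1$ elements of $X$ can hash below that endpoint, so the event forces $N_i \le t-1$ and it suffices to upper bound $\Pr_l[N_i \le t-1]$.

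Since $l$-wise independence with $l\ge 2$ implies uniform marginals, each indicator in the sum $N_i$ has mean $(1+\epsilon(i-1))t/m$, so $\mu_i := E[N_i] = t(1+\epsilon(i-1))$, and the downward deviation we must survive is $a_i := \mu_i - (t-1) = t\epsilon(i-1)+1 \ge t\epsilon(i-1)$. The case $i=1$ is trivial since $1/1^{d+1}=1$, so from now on I assume $i\ge 2$ and use $i-1 \ge i/2$. The key tool I would invoke is a Bellare--Rompel / Schmidt--Siegel--Srinivasan style moment bound: for a sum $S$ of $l$-wise independent indicators with mean $\mu$ and even $l$,
$$\Pr[\,|S-\mu| \ge a\,] \le 8 \left(\frac{l\mu + l^2}{a^2}\right)^{l/2}.$$
Picking the even value $l = 2(d+1)$ yields exponent $l/2 = d+1$, exactly matching the desired $1/i^{d+1}$.

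Substituting $\mu_i$ and $a_i$ gives
$$\Pr_l[N_i \le t-1] \le 8\left(\frac{l\,t(1+\epsilon(i-1)) + l^2}{t^2 \epsilon^2 (i-1)^2}\right)^{d+1}.$$
Using $i-1 \ge i/2$ in the denominator, the bracketed ratio is of order $l/(t\epsilon^2 i^2)$ provided $t\epsilon^2$ dominates both the $l$ scale and the $l^2/t$ scale. The stated hypothesis $k > d-1 + 2\cdot 8^{2/l}(6l)^{1+1/l}/\epsilon^2$ is precisely $t\epsilon^2 > 2\cdot 8^{2/l}(6l)^{1+1/l}$, calibrated so that the bracket is at most $1/(8^{2/l}\,i^2)$; raising to the $(d+1)$-th power and multiplying by $8$ then yields $1/i^{d+1}$, as required.

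The hard part will be the bookkeeping at this final step: one must split $l\mu_i + l^2$ into its two summands and bound each separately against $t^2\epsilon^2(i-1)^2$ (the $l^2$ piece only bites when $t$ is not much larger than $l$), and one must verify the inequality uniformly over all $i\ge 2$ so that the losses from replacing $i-1$ by $i/2$ and from the $(1+\epsilon(i-1))$ factor are absorbed by the slack in the hypothesis. The unusual constants $8^{2/l}$ and $(6l)^{1+1/l}/\epsilon^2$ in the lower bound on $k$ are exactly what is needed to absorb the leading factor $8$ from the concentration inequality and to close the $i/(i-1)$ gap at the boundary case $i=2$.
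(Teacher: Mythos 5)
Your plan follows essentially the same route as the paper: reduce the event $RANK_t(h(X))\in b_i$ to a lower-tail deviation of a sum of $l$-wise independent indicators, apply an $l$-th moment concentration bound, and use the hypothesis on $k$ (equivalently on $t=k-d+1$) to absorb the constants. The paper does this via Markov's inequality on $E(|Z-E_i|^l)\le 8(6l)^{\frac{l+1}{2}}E_i^{l/2}$ (its Lemma \ref{lemmaIndyk}, after Indyk), whereas you invoke the Bellare--Rompel form $8\bigl((l\mu+l^2)/a^2\bigr)^{l/2}$; these are interchangeable here, but note that the specific constant $2\cdot 8^{2/l}(6l)^{1+1/l}/\epsilon^2$ in the hypothesis is calibrated exactly to Indyk's bound, so with your inequality the final bookkeeping is not automatic and must be redone --- it does go through, since $(6l)^{1+1/l}\ge 6l$ leaves enough slack to cover both the $l\mu$ and $l^2$ terms and the loss from replacing $i-1$ by $i/2$. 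Two further points: your treatment of the left endpoint (mean $t(1+\epsilon(i-1))$, with $i=1$ handled trivially) is actually more careful than the paper's, which works with $t(1+\epsilon i)$; on the other hand you fix $l=2(d+1)$, while the lemma is claimed, and later used, for every even $l\ge 2d+2$ (it is applied with $l'-d$ where $l'\ge 3d+2$), so you should check that your argument survives for larger $l$ --- it does, because the required inequality weakens to ``bracket $\le 8^{-2/l}i^{-2(d+1)/l}$'' and $i^{-2(d+1)/l}\ge i^{-1}$.
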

\begin{proof}
For block $b_i$, $X = \{x_1,\dots,x_{m}\} $ we define $Z_j$ to be indicator variable s.t.
\begin{equation}
Z_j =
\left\{
	\begin{array}{ll}
		1 \ \ \ &  h(x_j) < (1+\epsilon j) \frac{t|U|}{m}\\
		0 \ \ \ & otherwise
	\end{array}
\right. \nonumber
\end{equation}
In addition we define $Z = \sum_jZ_j$, and $E_i$ to be the expected value of $Z$. Notice that since $Z$ is sum of indicator variables $E_i= (1+\epsilon i) \frac{t}{m}(m) = t(1+\epsilon i)$. \\
We use the above definitions to show that
\begin{equation}
\Pr_l\left[ RANK_{t}(h(X)) \in b_i\right] \le \nonumber
\end{equation}
\begin{equation}
\Pr_l[ \text{number of hash values smaller than the lower boundary of block } b_i < t] = \nonumber
\end{equation}
\begin{equation}
\Pr_l[Z < t] \le \Pr_l[E_i-Z \ge E_i-t] \le \Pr_l[|E_i-Z| \ge E_i-t] =\nonumber
\end{equation}
\begin{equation}
\Pr_l[|Z-E_i| \ge t(1+\epsilon i)-t] \nonumber
\end{equation}
Using Markov inequality, assuming l is even:
\begin{equation}
\Pr_l[|Z-E_i| \ge t\epsilon i] \le \frac{E(|Z-E_i|^l)}{[t\epsilon i]^l} \nonumber
\end{equation}

We use the following from lemma \ref{lemmaIndyk}:
\begin{equation}
E(|Z-E_i|^l) \le 8(6l)^{\frac{l+1}{2}}(E_i)^{\frac{l}{2}}\nonumber
\end{equation}
Thus,
\begin{equation}
\Pr_l\left[ RANK_{t}(h(X)) \in b_i\right] \le
 \frac{8(6l)^{\frac{l+1}{2}}(t(1+\epsilon i))^{\frac{l}{2}}}{[t\epsilon i]^l} \nonumber
\end{equation}
In order to have
\begin{equation}
\Pr_l\left[ RANK_{t}(h(X)) \in b_i\right] \le \frac{1}{i^{d+1}} \nonumber
\end{equation}
We need
\begin{equation}
\frac{8(6l)^{\frac{l+1}{2}}(t(1+\epsilon i) )^{\frac{l}{2}}}{[t\epsilon i]^l} \le \frac{1}{i^{d+1}} \nonumber
\end{equation}
or
\begin{equation}
\frac{8(6l)^{\frac{l+1}{2}}(1+\epsilon i)^{\frac{l}{2}}}{\epsilon^l i^{l-d-1}} \le t^{\frac{l}{2}} \nonumber
\end{equation}
\begin{equation}
\frac{8^{\frac{2}{l}}(6l)^{1+\frac{1}{l}}(1+\epsilon i)}{\epsilon^2 i^{\frac{2(l-d-1)}{l}}} \le t \nonumber
\end{equation}
choosing $l\ge 2d+2$ and substituting variables, we now need to show
\begin{equation}
\frac{8^{\frac{2}{l}}(6l)^{1+\frac{1}{l}}(1+\epsilon i)}{\epsilon^2 i} + d - 1 \le k \nonumber
\end{equation}
\begin{equation}
8^{\frac{2}{l}}\frac{(6l)^{1+\frac{1}{l}}}{\epsilon^2 i} + 8^{\frac{2}{l}}\frac{(6l)^{1+\frac{1}{l}}\mathbf{}\epsilon}{\epsilon^2}+ d - 1\le k \nonumber
\end{equation}
which is satisfied for $k>d - 1 + 2\cdot8^{\frac{2}{l}}\frac{(6l)^{1+\frac{1}{l}}}{\epsilon^2}$
\qed
\end{proof}

As the proof for negative blocks is similar, we left it to the appendix. See lemma \ref{lemma4}, as follows:\\
\begin{emph}
{\bf Lemma}
For $i\ge 0$, $d>0$, $\epsilon \in (0,1)$, $k> d - 1 + 2\cdot8^{\frac{2}{l}}\frac{(6l)^{1+\frac{1}{l}}}{\epsilon^2}$ and $l\ge 2d+2$:
\begin{equation}
\Pr_l\left[ RANK_{t}(h(X)) \in b_{-i}\right] \le \frac{1}{i^{d+1}} \nonumber
\end{equation}
\end{emph}

\subsection{Bounding $\Delta$}
In this section we prove the upper and lower bounds of lemma \ref{lemma2}, i.e. that  $-\epsilon\frac{{k \choose d}}{{n \choose d}} \le \Delta \le \epsilon\frac{{k \choose d}}{{n \choose d}}$.
As $d=2$ is the main use case, and since the proof for general $d$ is too technical, we focus on the first, and leave the extended proof for the full paper.

We define $\Pr_l\left[ RANK_{t}(h(X)) \in b_{\ge j}\right]$ by $\Pr_l\left[ \cup_{i=j}^{\infty} RANK_{t}(h(X)) \in b_{i} \right]$ and similarly $\Pr_l\left[ RANK_{t}(h(X)) \in b_{\le j}\right]$ by $\Pr_l\left[ \cup_{i=-\infty}^{j} RANK_{t}(h(X)) \in b_{i} \right]$.

\begin{lemma}
\label{lemmaDeltaUpper}
For $d=2$, $d>0$, $\epsilon \in (0,1)$, $k>d-1 + 2\cdot8^{\frac{2}{l}}\frac{(6l)^{1+\frac{1}{l}}}{\epsilon^2}$ and $l\ge 2d+2$:\\
\begin{center}
$\Delta \le \epsilon\frac{{k \choose d}}{{n \choose d}}$
\end{center}
\end{lemma}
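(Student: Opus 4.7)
The plan is to split the sum defining $\Delta$ by block index---positive blocks $b_i$ with $i \ge 1$, and non-positive blocks $b_i$ with $i \le 0$---and to bound each piece separately using Lemma \ref{lemma3} (and its negative-index analog from the appendix) together with the pairwise independence of $h(y_1), h(y_2)$ that is available because $l \ge 2d+2 = 6$ when $d=2$. For non-positive blocks the conditional probability is bounded above by $((1-\epsilon|i|)t/m)^2 \le (t/m)^2$, which is only negligibly larger than the target $\binom{k}{2}/\binom{n}{2}$, so these blocks contribute at most an absorbable lower-order term; the real work lies in controlling the positive-block contribution.

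For a positive block $b_i$ (with $i \ge 1$), the event $RANK_t(h(X)) \in b_i$ forces $RANK_t(h(X)) < (1+\epsilon i)\tfrac{t|U|}{m}$, so $\{h(y_1), h(y_2) < RANK_t(h(X))\}$ is contained in $\{h(y_1), h(y_2) < (1+\epsilon i)\tfrac{t|U|}{m}\}$. Using pairwise independence of $h(y_1), h(y_2)$ together with an Indyk-style moment decoupling from the conditioning event (the same machinery that produced Lemma \ref{lemma3}), the conditional probability is bounded above by $((1+\epsilon i)\tfrac{t}{m})^2$. Since $t = k-1$ and $m = n-2$ give $\binom{k}{2}/\binom{n}{2} = (t/m)^2(1 + O(1/k))$, the bracketed quantity inside $\Delta$ for block $b_i$ is at most
\begin{equation}
\left(\tfrac{t}{m}\right)^2\!\left[(1+\epsilon i)^2 - 1\right] + O(1/k)\!\left(\tfrac{t}{m}\right)^2 = \left(\tfrac{t}{m}\right)^2\!\left(2\epsilon i + \epsilon^2 i^2\right) + O(1/k)\!\left(\tfrac{t}{m}\right)^2. \nonumber
\end{equation}

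Next, multiplying by the block-mass bound $\Pr_l[RANK_t \in b_i] \le 1/i^{d+1} = 1/i^{3}$ from Lemma \ref{lemma3} and summing, the positive contribution to $\Delta$ is bounded by
\begin{equation}
\left(\tfrac{t}{m}\right)^2 \sum_{i=1}^{i_{\max}} \frac{2\epsilon i + \epsilon^2 i^2}{i^{3}} = \left(\tfrac{t}{m}\right)^2\!\left[2\epsilon \sum_{i \ge 1} \tfrac{1}{i^2} + \epsilon^2 \sum_{i=1}^{i_{\max}} \tfrac{1}{i}\right], \nonumber
\end{equation}
where $i_{\max} = O(1/\epsilon)$ because $b_i \subseteq [0,|U|]$ forces $(1+\epsilon i)\tfrac{t|U|}{m} \le |U|$. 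The first series converges to a constant and the second is $O(\log(1/\epsilon))$, so the positive contribution is $O(\epsilon)(t/m)^2$. Rescaling the internal $\epsilon$ by a suitable constant (which only tightens the hidden constant in the hypothesis on $k$) then converts this into the required $\Delta \le \epsilon \binom{k}{2}/\binom{n}{2}$.

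The main obstacle is the factoring step above. Under $l$-wise independence, $h(y_1)$ and $h(y_2)$ are not literally independent of the conditioning event $\{RANK_t(h(X)) \in b_i\}$, because the latter depends on many of the $h(x_j)$. Justifying the upper bound $((1+\epsilon i)\tfrac{t}{m})^2$ rigorously requires either repeating the $l$-th moment calculation of Lemma \ref{lemma3} with the additional product indicator for $\{h(y_1), h(y_2) < (1+\epsilon i)\tfrac{t|U|}{m}\}$ attached---which is precisely why the hypothesis demands $l \ge 2d+2$ rather than merely $l \ge 2$---or, alternatively, expanding $\mathbb{1}[RANK_t \in b_i]$ via inclusion--exclusion over subsets of size at most $l - d$ of the $h(x_j)$, where joint $l$-wise independence with $h(y_1), h(y_2)$ applies directly. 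Either route is technical but routine, which is consistent with the authors' decision to present only the $d=2$ case here and defer general $d$ to the full paper.
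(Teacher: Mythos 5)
Your overall architecture matches the paper's---decompose $\Delta$ over blocks, bound the conditional probability on block $b_i$ by $\bigl((1+\epsilon i)\tfrac{t}{m}\bigr)^2$ via the block's upper boundary, and invoke Lemma \ref{lemma3} and Lemma \ref{lemma4} for the block probabilities---but there is a genuine gap in how you sum the positive-block contributions. Pairing the per-block bound $\Pr_l\left[RANK_t(h(X))\in b_i\right]\le 1/i^{3}$ with the per-block excess $(\tfrac{t}{m})^2(2\epsilon i+\epsilon^2 i^2)$ leaves the divergent piece $\epsilon^2\sum_i 1/i$, and your cutoff $i_{\max}=O(1/\epsilon)$ is false: the last block intersecting $[0,|U|]$ has index roughly $\frac{1}{\epsilon}\bigl(\frac{m}{t}-1\bigr)=\Theta\bigl(\frac{n}{k\epsilon}\bigr)$, because the blocks are centered at $\frac{t|U|}{m}$ and must reach up to $|U|$. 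Your harmonic tail therefore contributes $\epsilon^2\cdot\Theta\bigl(\log\frac{n}{k\epsilon}\bigr)$, which is not $O(\epsilon)\binom{k}{2}/\binom{n}{2}\cdot(\tfrac{m}{t})^2$-compatible once $n/k$ is super-exponential in $1/\epsilon$; nothing in the hypotheses excludes that regime.

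The paper closes exactly this hole with an Abel-summation (telescoping) step: it rewrites $\sum_i \Pr\left[b_i\right]a_i$ as a sum of tail probabilities $\Pr\left[b_{\ge i}\right]$ (respectively $\Pr\left[b_{\le i}\right]$) times consecutive differences $a_i-a_{i-1}$, where $a_i=(\tfrac{t}{m})^2(1+\epsilon i)^2$. The moment argument behind Lemma \ref{lemma3} in fact bounds the tail $\Pr\left[b_{\ge i}\right]\le 1/i^{3}$, and the consecutive difference is $(\tfrac{t}{m})^2\bigl(2\epsilon+\epsilon^2(2i-1)\bigr)$, so each summand is $O(\epsilon/i^{3}+\epsilon^{2}/i^{2})$ and the series converges to $O(\epsilon)(\tfrac{t}{m})^2$ uniformly in $n$ and $k$. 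To repair your version you would need either this rearrangement or a strictly stronger per-block decay such as $1/i^{d+2}$, which the stated Lemma \ref{lemma3} does not supply at $l=2d+2$. Your discussion of decoupling $h(y_1),h(y_2)$ from the conditioning event correctly flags a point the paper itself treats only informally (``using $d$ independent out of $l+d$''), so I would not count that against you, but the summation step as written does not prove the lemma.
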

\begin{proof}

\begin{equation}
\sum_{i=-\infty}^{\infty} \Pr_{l+d}\left[ RANK_{t}(h(X)) \in b_i\right] \times \nonumber
\end{equation}
\begin{equation}
\left[ \Pr_{l+d}\left[ h(y_1), h(y_2),\dots,h(y_d) \le RANK_{t}(h(X)) \ | \ RANK_{t}(h(X)) \in b_i  \right] - \frac{{k \choose d}}{{n \choose d}}\right] \le \nonumber
\end{equation}
Using $d$ independent (out of $l + d$) for $h(y_1), h(y_2),\dots,h(y_d)$, we get
\begin{equation}
\sum_{i=-\infty}^{\infty} \Pr_l\left[ RANK_{t}(h(X)) \in b_i\right] \left[ (\frac{t}{m})^d(1+\epsilon i)^d-\frac{{k \choose d}}{{n \choose d}} \right] = \nonumber
\end{equation}

\begin{equation}
\sum_{i=-\infty}^{-1} \Pr_l\left[ RANK_{t}(h(X)) \in b_i\right] \left[ (\frac{t}{m})^d(1+\epsilon i)^d- \frac{{k \choose d}}{{n \choose d}}\right] + \nonumber
\end{equation}
\begin{equation}
\Pr_l\left[ RANK_{t}(h(X)) \in b_0\right] \left[ (\frac{t}{m})^d-\frac{{k \choose d}}{{n \choose d}} \right] + \nonumber
\end{equation}
\begin{equation}
\Pr_l\left[ RANK_{t}(h(X)) \in b_1\right] \left[ (\frac{t}{m})^d(1+\epsilon)^d-\frac{{k \choose d}}{{n \choose d}} \right] + \nonumber
\end{equation}
\begin{equation}
\sum_{i=2}^{\infty} \Pr_l\left[ RANK_{t}(h(X)) \in b_i\right] \left[ (\frac{t}{m})^d(1+\epsilon i)^d-\frac{{k \choose d}}{{n \choose d}} \right] = \nonumber
\end{equation}
By changing the order we get a telescoping sum as follows:
\begin{equation}
\sum_{i=-\infty}^{-1} \Pr_l\left[ RANK_{t}(h(X)) \in b_{\le i}\right] \left[ (\frac{t}{m})^d(1+\epsilon i)^d- (\frac{t}{m})^d(1+\epsilon (i+1))^d\right] + \nonumber
\end{equation}
\begin{equation}
\Pr_l\left[ RANK_{t}(h(X)) \in b_{\le 0}\right] \left[ (\frac{t}{m})^d-\frac{{k \choose d}}{{n \choose d}} \right] + \nonumber
\end{equation}
\begin{equation}
\Pr_l\left[ RANK_{t}(h(X)) \in b_{\ge 1}\right] \left[ (\frac{t}{m})^d(1+\epsilon)^d-\frac{{k \choose d}}{{n \choose d}} \right] + \nonumber
\end{equation}
\begin{equation}
\sum_{i=2}^{\infty} \Pr_l\left[ RANK_{t}(h(X)) \in b_{\ge i}\right] \left[ (\frac{t}{m})^d(1+\epsilon i)^d- (\frac{t}{m})^d(1+\epsilon (i-1))^d\right] \le \nonumber
\end{equation}

Applying lemma \ref{lemma3} and lemma \ref{lemma4}, bounding the probabilities of blocks $b_0$ and $b_1$ with $1$
\begin{equation}
\sum_{i=-\infty}^{-1} \frac{1}{|i|^{d+1}} |(\frac{t}{m})^d(1+\epsilon i)^d- (\frac{t}{m})^d(1+\epsilon (i+1))^d| + \nonumber
\end{equation}
\begin{equation}
|(\frac{t}{m})^d-\frac{{k \choose d}}{{n \choose d}}| + |(\frac{t}{m})^d(1+\epsilon)^d-\frac{{k \choose d}}{{n \choose d}}| +\nonumber
\end{equation}
\begin{equation}
\sum_{i=2}^{\infty} \frac{1}{i^{d+1}} |(\frac{t}{m})^d(1+\epsilon i)^d- (\frac{t}{m})^d(1+\epsilon (i-1))^d| = \nonumber
\end{equation}

We now substitute $d$

\begin{equation}
\sum_{i=1}^{\infty} \frac{1}{|i|^3} |(\frac{k-1}{n-2})^2(\epsilon^2(2i-1) - 2\epsilon))| + \nonumber
\end{equation}
\begin{equation}
|(\frac{k-1}{n-2})^2-\frac{k}{n}\frac{k-1}{n-1}| + |(\frac{k-1}{n-2})^2(1+\epsilon)^2-\frac{k}{n}\frac{k-1}{n-1}| +\nonumber
\end{equation}
\begin{equation}
\sum_{i=2}^{\infty} \frac{1}{i^3} |(\frac{k-1}{n-2})^2(\epsilon^2(2i-1) + 2\epsilon))| \le \nonumber
\end{equation}

\begin{equation}
2\frac{k}{n}\frac{k-1}{n-1} \sum_{i=1}^{\infty} \frac{1}{|i|^3} |\epsilon^2(2i-1) - 2\epsilon| + \nonumber
\end{equation}
\begin{equation}
|(\frac{k-1}{n-2})^2-\frac{k}{n}\frac{k-1}{n-1}| + |(\frac{k-1}{n-2})^2(1+\epsilon)^2-\frac{k}{n}\frac{k-1}{n-1}| +\nonumber
\end{equation}
\begin{equation}
2\frac{k}{n}\frac{k-1}{n-1} \sum_{i=2}^{\infty} \frac{1}{i^3} |\epsilon^2(2i-1) + 2\epsilon| \le \nonumber
\end{equation}

\begin{equation}
c\frac{k}{n}\frac{k-1}{n-1}\epsilon  \nonumber
\end{equation}

\end{proof}
\qed

The proof for negative blocks is similar, hence we left it to the appendix. See lemma \ref{lemmaDeltaLower}, as follows:\\
\begin{emph}
{\bf Lemma}
For $d=2$, $\epsilon \in (0,1)$, $k>d-1 + 2\cdot8^{\frac{2}{l}}\frac{(6l)^{1+\frac{1}{l}}}{\epsilon^2}$ and $l\ge 2d+2$: \ \ \ $-\Delta \le \epsilon\frac{{k \choose d}}{{n \choose d}}$.
\end{emph}

\begin{theorem}
For $d=2$, $\epsilon \in (0,1)$, $\epsilon'=\frac{\epsilon}{c}$ , $k>d-1 + 2\cdot8^{\frac{2}{l}}\frac{(6l)^{1+\frac{1}{l}}}{\epsilon'^2}$ and $l\ge 3d+2$,
Any l-wise independent family of hash functions is approximately \emph{\dkmin (\edkmin)}.
\end{theorem}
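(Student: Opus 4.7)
The plan is to combine Lemma \ref{lemma2} with the two-sided bounds on $\Delta$ from Lemmas \ref{lemmaDeltaUpper} and \ref{lemmaDeltaLower}, after one rescaling of the accuracy parameter. By Lemma \ref{lemma2}, proving \edkmin independence reduces to the estimate $|\Delta| \le \epsilon \cdot \frac{{k\choose d}}{{n\choose d}}$. When the block partition is built with accuracy parameter $\epsilon'$, Lemmas \ref{lemmaDeltaUpper} and \ref{lemmaDeltaLower} yield the quantitative bound $|\Delta| \le c\,\epsilon'\,\frac{k(k-1)}{n(n-1)}$, where $c$ is the explicit absolute constant absorbing the residual gap between $\left(\frac{k-1}{n-2}\right)^2$ and $\frac{k(k-1)}{n(n-1)}$, the contributions of the central blocks $b_0$ and $b_1$, and the two convergent tail sums that appear at the end of the proof of Lemma \ref{lemmaDeltaUpper}. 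Choosing $\epsilon' = \epsilon/c$ converts this directly into the hypothesis of Lemma \ref{lemma2}, and the conclusion follows.

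The only nontrivial accounting concerns the degree of independence. Lemma \ref{lemmaDeltaUpper} is proved under $\Pr_{l+d}$: Lemma \ref{lemma3} is applied to the $X$-coordinates, for which $l \ge 2d+2$ degrees suffice, and then a further $d$ degrees are used to factor the conditional probability of $h(y_1), \dots, h(y_d)$ lying below the threshold as $(t/m)^d(1+\epsilon' i)^d$. Thus the ambient family must be $(l+d)$-wise independent with $l \ge 2d+2$, i.e.\ $(3d+2)$-wise independent, which matches the theorem's hypothesis $l \ge 3d+2$. Similarly, the lower bound on $k$ stated in the theorem is just the hypothesis of Lemmas \ref{lemma3} and \ref{lemma4} rewritten with $\epsilon'$ in place of $\epsilon$, and this is exactly what makes the telescoping bound in Lemma \ref{lemmaDeltaUpper} output the factor $c\epsilon'$ rather than $c\epsilon$.

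The main obstacle is the bookkeeping required to confirm that the constant $c$ above is genuinely absolute, independent of $n$, $k$, and $l$, so that the rescaling $\epsilon' = \epsilon/c$ does not circularly inflate either the independence requirement or the lower bound on $k$. For $d=2$ this is immediate from the convergence of $\sum_{i \ge 1} i^{-3}$ and from the observation that for $\epsilon' \in (0,1)$ the $\epsilon'^2$ contributions in the summands of Lemma \ref{lemmaDeltaUpper} are dominated by the $\epsilon'$ contributions. Consequently, no new estimate is needed beyond the three cited lemmas, and the theorem reduces to a short combination argument.
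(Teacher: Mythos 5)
Your proposal is correct and follows essentially the same route as the paper: the theorem is obtained by feeding the bounds of Lemmas \ref{lemmaDeltaUpper} and \ref{lemmaDeltaLower} into Lemma \ref{lemma2}, with the rescaling $\epsilon'=\epsilon/c$ absorbing the constant from the telescoping sums. Your additional bookkeeping of the independence degree (the $d$ extra degrees for $h(y_1),\dots,h(y_d)$ on top of $l\ge 2d+2$, giving $3d+2$) is exactly the accounting the paper leaves implicit.
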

\begin{proof}
Applying lemma \ref{lemmaDeltaUpper} and lemma \ref{lemmaDeltaLower} to lemma \ref{lemma2} concludes the proof.
\end{proof}
\qed

\begin{theorem}
For $d\le\frac{k}{c'}$, $\epsilon \in (0,1)$, $\epsilon'=\frac{\epsilon}{c}$ , $k>d-1 + 2\cdot8^{\frac{2}{l}}\frac{(6l)^{1+\frac{1}{l}}}{\epsilon'^2}$ and $l\ge 3d+2$,
Any l-wise independent family of hash functions is approximately \emph{\dkmin (\edkmin)}.
\end{theorem}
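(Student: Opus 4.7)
The plan is to mirror the $d=2$ argument of Lemma~\ref{lemmaDeltaUpper} (and its counterpart Lemma~\ref{lemmaDeltaLower}) with $d$ retained as a parameter, and then apply both to Lemma~\ref{lemma2}. Starting from the decomposition of Lemma~\ref{lemma2}, I condition on which block $b_i$ contains $RANK_t(h(X))$ and spend $d$ of the $l+d=3d+2$ available degrees of independence on the fully-independent evaluations $h(y_1),\dots,h(y_d)$ given the rank event. This exposes the factor $(t/m)^d(1+\epsilon i)^d$ inside each bracketed term while leaving $l$-wise independence available for the block probability bounds of Lemmas~\ref{lemma3} and~\ref{lemma4}.

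Next I reorganize the sum by the telescoping trick used in the $d=2$ case, separating the two tails over $|i|\ge 2$, the central blocks $b_0$ and $b_1$, and the mismatch between $(t/m)^d$ and ${k\choose d}/{n\choose d}$. The mean value theorem applied to $f(x)=(1+\epsilon x)^d$ gives
\[
\bigl|(1+\epsilon i)^d - (1+\epsilon(i\pm 1))^d\bigr| \le d\epsilon\,(1+\epsilon|i|)^{d-1},
\]
and expanding $(1+\epsilon|i|)^{d-1}$ by the binomial theorem reduces each tail to $d\epsilon\cdot(t/m)^d$ times a weighted sum of series of the form $\epsilon^j\sum_{i\ge 2} i^j/i^{d+1}$ for $0\le j\le d-1$. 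Since $d+1-j\ge 2$ in this range, every such series converges to a constant, so the total tail contribution is $O(d\epsilon\,(t/m)^d)$.

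It remains to handle the two central blocks and the mismatch $|(t/m)^d - {k\choose d}/{n\choose d}|$. The key observation is that
\[
\frac{(t/m)^d}{{k\choose d}/{n\choose d}} \;=\; \prod_{j=0}^{d-1}\frac{(k-d+1)(n-j)}{(n-d)(k-j)} \;=\; 1 + O(d^2/k),
\]
because each of the $d$ factors differs from $1$ by $O(d/k)$. The hypothesis $d\le k/c'$ for a suitable constant $c'$ is exactly what is needed to make this multiplicative error $O(\epsilon)$, so the mismatch and the $b_0,b_1$ contributions are each bounded by $O(\epsilon\,{k\choose d}/{n\choose d})$. Combining the three contributions, rescaling $\epsilon'=\epsilon/c$ to absorb all hidden constants, and invoking Lemma~\ref{lemma2} completes the argument.

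The main obstacle I expect is bookkeeping: making sure that the three error sources---the binomial tails, the $b_0,b_1$ contributions, and the $(t/m)^d$-vs-${k\choose d}/{n\choose d}$ mismatch---are each dominated by $\epsilon\,{k\choose d}/{n\choose d}$ after rescaling by $c$ and $c'$. Conceptually the argument is identical to the $d=2$ case; what is genuinely $d$-dependent is only the degree-$d$ binomial algebra in the tail sums and the factor-by-factor comparison of two $d$-term binomial coefficients, both of which are mechanical once the overall structure above is in place.
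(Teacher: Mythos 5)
Your overall plan coincides with the paper's: the paper's own ``proof'' of this theorem is a single sentence deferring the generalization of Lemmas~\ref{lemmaDeltaUpper} and~\ref{lemmaDeltaLower} to the full paper, so you are attempting to supply exactly the argument the authors omit, and your decomposition, the telescoping rearrangement, and the use of $d$ extra degrees of independence for $h(y_1),\dots,h(y_d)$ are the right generalization of the $d=2$ case. Your tail analysis is also essentially sound: the $i^{-(d+1)}$ decay (with the sums starting at $|i|=2$) beats the $2^{O(d)}$ growth coming from expanding $(1+\epsilon i)^{d-1}$, so the tails are indeed $O(\epsilon)\,(t/m)^d$ with an absolute constant.

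The gap is in your treatment of the central blocks and of the mismatch term, where the claimed $O(\epsilon\,{k\choose d}/{n\choose d})$ bounds silently hide factors of $d$. The $b_1$ bracket is $|(t/m)^d(1+\epsilon)^d-{k\choose d}/{n\choose d}|\approx {k\choose d}/{n\choose d}\,\bigl((1+\epsilon)^d-1\bigr)=\Theta(d\epsilon)\,{k\choose d}/{n\choose d}$, not $O(\epsilon)\,{k\choose d}/{n\choose d}$: a block of width $\epsilon\,t|U|/m$ induces a multiplicative uncertainty of $(1+\epsilon)^d$ in the conditional probability of $d$ independent points all falling below the rank value. Similarly, your product formula gives $(t/m)^d\big/\bigl({k\choose d}/{n\choose d}\bigr)=1+\Theta(d^2/k)$, and the hypothesis $d\le k/c'$ only yields $d^2/k\le d/c'$, which is not $O(\epsilon)$; the bound $d^2/k\le\epsilon$ is guaranteed by $k=\Omega(d/\epsilon'^2)$ only when $d\epsilon=O(1)$. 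Consequently the constant $c$ in $\epsilon'=\epsilon/c$ must be taken $\Theta(d)$ (equivalently, the blocks must be narrowed to width $\epsilon/d$), and through the hypothesis $k>d-1+2\cdot 8^{\frac{2}{l}}(6l)^{1+\frac{1}{l}}/\epsilon'^2$ this inflates $k$ to $\Omega(d^3/\epsilon^2)$ rather than the advertised $O(d/\epsilon^2)$. The theorem as literally stated survives if $c$ and $c'$ are allowed to depend on $d$, but your claim that a fixed rescaling absorbs all hidden constants is not justified, and this is precisely the non-mechanical part of the generalization that the paper leaves open.
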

\begin{proof}
Applying generalization of lemma \ref{lemmaDeltaUpper} and lemma \ref{lemmaDeltaLower} to lemma \ref{lemma2} concludes the proof.
Due to lack of space we omitted part of the generalizations, which will be given in the full paper.
\end{proof}
\qed

\bibliography{bib}

\newpage
\appendix
\section{Appendix}
\begin{lemma}
\label{lemma4}
For $i\ge 0$, $d>0$, $\epsilon \in (0,1)$, $k> d - 1 + 2\cdot8^{\frac{2}{l}}\frac{(6l)^{1+\frac{1}{l}}}{\epsilon^2}$ and $l\ge 2d+2$
\begin{equation}
\Pr_l\left[ RANK_{t}(h(X)) \in b_{-i}\right] \le \frac{1}{i^{d+1}} \nonumber
\end{equation}

\end{lemma}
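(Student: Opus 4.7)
The plan is to mirror the proof of Lemma \ref{lemma3} almost line-for-line, with the single modification that for negative blocks the indicator variables should count hash values falling below the \emph{upper} endpoint of $b_{-i}$, since the event $RANK_t(h(X)) \in b_{-i}$ means the $t$-th smallest hash is too small rather than too large.

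Concretely, I would fix $i \ge 1$ (the boundary case $i=0$ reduces to $b_0$ and is already bounded trivially by $1$ in the applications in Lemma \ref{lemmaDeltaUpper}; one may also assume $i \le 1/\epsilon$ so that the upper boundary is nonnegative, per the stated convention that blocks outside $[0,|U|]$ are ignored). Define
\begin{equation}
Z_j =
\begin{cases}
1 & h(x_j) < (1 - \epsilon i)\frac{t|U|}{m} \\
0 & \text{otherwise}
\end{cases},
\qquad Z = \sum_{j=1}^m Z_j,
\qquad E_{-i} := E[Z] = t(1-\epsilon i).
\end{equation}
Since $RANK_t(h(X)) \in b_{-i}$ forces the $t$-th smallest hash value to lie below the upper boundary of $b_{-i}$, at least $t$ of the $Z_j$ must equal $1$, giving
\begin{equation}
\Pr_l[RANK_t(h(X)) \in b_{-i}] \le \Pr_l[Z \ge t] = \Pr_l[Z - E_{-i} \ge t\epsilon i] \le \Pr_l[|Z - E_{-i}| \ge t\epsilon i].
\end{equation}
From this point on I would reuse the Lemma \ref{lemma3} derivation verbatim: apply Markov's inequality to $|Z-E_{-i}|^l$ for even $l$, invoke Lemma \ref{lemmaIndyk} to obtain $E[|Z-E_{-i}|^l] \le 8(6l)^{(l+1)/2}(E_{-i})^{l/2}$, and arrive at the bound $8(6l)^{(l+1)/2}(t(1-\epsilon i))^{l/2}/(t\epsilon i)^l$.

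Because $1 - \epsilon i \le 1 + \epsilon i$, this expression is uniformly tighter than its positive-block counterpart, so taking $l$-th roots, using $l \ge 2d+2$ to conclude $i^{2(l-d-1)/l} \ge i$, and splitting the resulting inequality into a $1/i$ contribution plus an $\epsilon$ contribution all go through without modification. The same hypothesis $k > d-1 + 2\cdot 8^{2/l}(6l)^{1+1/l}/\epsilon^2$ therefore yields $\Pr_l[RANK_t(h(X)) \in b_{-i}] \le 1/i^{d+1}$. The only mild subtlety is orienting the tail deviation in the correct direction, i.e.\ writing $Z - E_{-i} \ge t - E_{-i} = t\epsilon i$; once this is in place the algebra is strictly easier than in the positive-block case and no new ingredient is needed.
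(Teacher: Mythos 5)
Your proposal is correct and follows essentially the same route as the paper's own proof in the appendix: identical indicator variables thresholded at the upper boundary $(1-\epsilon i)\frac{t|U|}{m}$, the same reduction $\Pr_l[Z\ge t]=\Pr_l[Z-E_{-i}\ge t\epsilon i]\le\Pr_l[|Z-E_{-i}|\ge t\epsilon i]$, and the same Markov-plus-moment-bound finish. Your added remarks on the $i=0$ boundary case and on discarding blocks with negative upper endpoint are sensible housekeeping that the paper leaves implicit.
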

\begin{proof}
For block $b_{-i}$, $X = \{x_1,\dots,x_{m}\} $ we define $Z_j$ to be indicator variable s.t.
\begin{equation}
Z_j =
\left\{
	\begin{array}{ll}
		1 \ \ \ &  h(x_j) < (1-\epsilon i) \frac{t|U|}{m}\\
		0 \ \ \ & otherwise
	\end{array}
\right. \nonumber
\end{equation}
In addition we define $Z = \sum_jZ_j$, and $E_{-i}$ to be the expected value of $Z$. Notice that since $Z$ is sum of indicator variables $E_{-i}= (1-\epsilon i) \frac{t}{m}m = (1-\epsilon i) t$. \\
We use the above definitions to show that

\begin{equation}
\Pr_l\left[ RANK_{t}(h(X)) \in b_{-i}\right] \le \nonumber
\end{equation}
\begin{equation}
\Pr_l[ \text{number of hash values smaller than the upper boundary of block } b_{-i} \ge t] = \nonumber
\end{equation}
\begin{equation}
\Pr_l[Z \ge t] = \Pr_l[Z-E_{-i} \ge t-E_{-i}] \le \Pr_l[|Z-E_{-i}| \ge t \epsilon i)] \nonumber
\end{equation}
Using Markov inequality, assuming l is even:
\begin{equation}
\Pr_l[|Z-E_{-i}| \ge t\epsilon i] \le \frac{E(|Z-E_{-i}|^l)}{[t \epsilon i ]^l} \nonumber
\end{equation}

We use the following from lemma \ref{lemmaIndyk}:
\begin{equation}
E(|Z-E_i|^l) \le 8(6l)^{\frac{l+1}{2}}(E_i)^{\frac{l}{2}}\nonumber
\end{equation}
Thus,
\begin{equation}
\Pr_l\left[ RANK_{t}(h(X)) \in b_{-i}\right] \le
 \frac{8(6l)^{\frac{l+1}{2}}(t(1-\epsilon i) )^{\frac{l}{2}}}{[t\epsilon i]^l} \nonumber
\end{equation}
In order to have
\begin{equation}
\Pr_l\left[ RANK_{t}(h(X)) \in b_{-i}\right] \le \frac{1}{i^{d+1}} \nonumber
\end{equation}
We need
\begin{equation}
\frac{8(6l)^{\frac{l+1}{2}}(t(1-\epsilon i) )^{\frac{l}{2}}}{[t\epsilon i]^l} \le \frac{1}{i^{d+1}} \nonumber
\end{equation}
or
\begin{equation}
\frac{8(6l)^{\frac{l+1}{2}}(1-\epsilon i)^{\frac{l}{2}}}{\epsilon^l i^{(l-d-1)}} \le t^{\frac{l}{2}} \nonumber
\end{equation}
\begin{equation}
\frac{8^{\frac{2}{l}}(6l)^{1+\frac{1}{l}}(1-\epsilon i)}{\epsilon^2 i^{\frac{2(l-d-1)}{l}}} \le t \nonumber
\end{equation}
choosing $l\ge 2d+2$ and substituting variables, we now need to show
\begin{equation}
\frac{8^{\frac{2}{l}}(6l)^{1+\frac{1}{l}}(1-\epsilon i)}{\epsilon^2 i} + d - 1\le k \nonumber
\end{equation}
\begin{equation}
8^{\frac{2}{l}}\frac{(6l)^{1+\frac{1}{l}}}{\epsilon^2 i} - 8^{\frac{2}{l}}\frac{(6l)^{1+\frac{1}{l}}\epsilon}{\epsilon^2}+ d - 1\le k \nonumber
\end{equation}
which is satisfied for $k> d - 1 + \cdot8^{\frac{2}{l}}\frac{(6l)^{1+\frac{1}{l}}}{\epsilon^2}$
\qed
\end{proof}

\begin{lemma}
\label{lemmaDeltaLower}
For $d=2$, $\epsilon \in (0,1)$, $k>d-1 + 2\cdot8^{\frac{2}{l}}\frac{(6l)^{1+\frac{1}{l}}}{\epsilon^2}$ and $l\ge 2d+2$:\\
\begin{center}
$-\Delta \le \epsilon\frac{{k \choose d}}{{n \choose d}}$
\end{center}
\end{lemma}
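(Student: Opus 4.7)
The plan is to mirror the proof of Lemma \ref{lemmaDeltaUpper} essentially verbatim, with the sign reversed throughout. Starting from
\[
-\Delta = \sum_{i=-\infty}^{\infty} \Pr_{l+d}\!\left[\,RANK_t(h(X))\in b_i\,\right]\left[\tfrac{\binom{k}{d}}{\binom{n}{d}}-\Pr_{l+d}\!\left[\,h(y_1),\dots,h(y_d)\le RANK_t(h(X))\mid RANK_t(h(X))\in b_i\,\right]\right],
\]
I would first peel off $d$ out of the $l+d$ independence to replace each conditional probability by $(\tfrac{t}{m})^d(1+\epsilon i)^d$, exactly as in the upper-bound proof. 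This reduces the problem to a purely deterministic analysis of the telescoping sum in $i$.

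Next I would split the sum into four parts --- the tails $i\le -1$ and $i\ge 2$, and the two central blocks $b_0$ and $b_1$ --- and reorganize the tail sums by the usual Abel summation trick so that the probability factor becomes $\Pr_l[RANK_t(h(X))\in b_{\le i}]$ (for the left tail) or $\Pr_l[RANK_t(h(X))\in b_{\ge i}]$ (for the right tail), multiplied by consecutive differences $(\tfrac{t}{m})^d(1+\epsilon i)^d-(\tfrac{t}{m})^d(1+\epsilon(i\pm 1))^d$. These differences equal what they equal, but the key observation is that the upper-bound proof already passed to absolute values at this stage, so the bound it produced is in fact a bound on $|\Delta|$, and therefore on $-\Delta$ as well. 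I would apply Lemma \ref{lemma3} and Lemma \ref{lemma4} to replace the cumulative-block probabilities by $1/|i|^{d+1}$ (and bound the central blocks by $1$).

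After substituting $d=2$, the tails collapse to $O(\epsilon)\cdot\tfrac{k(k-1)}{n(n-1)}\sum_{i\ge 1} \tfrac{1}{i^3}(2\epsilon i+ \text{const})$, which converges and is $O(\epsilon\tfrac{\binom{k}{2}}{\binom{n}{2}})$. The two central terms $\bigl|(\tfrac{k-1}{n-2})^2-\tfrac{k(k-1)}{n(n-1)}\bigr|$ and $\bigl|(\tfrac{k-1}{n-2})^2(1+\epsilon)^2-\tfrac{k(k-1)}{n(n-1)}\bigr|$ are also $O(\epsilon\tfrac{\binom{k}{2}}{\binom{n}{2}})$ for $k<n$ in the regime assumed. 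Combining the four contributions gives $-\Delta\le c\,\epsilon\tfrac{\binom{k}{2}}{\binom{n}{2}}$ for a small absolute constant $c$, which after the usual reparametrization $\epsilon'=\epsilon/c$ yields the stated bound.

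The main obstacle, as in the upper bound, is not the telescoping manipulation itself but verifying that the tail bounds $1/|i|^{d+1}$ are strong enough once they are multiplied by the $O(\epsilon i)$ growth of the consecutive differences; with $d=2$ this leaves a $\sum 1/i^2$ tail which is finite, so the argument goes through. The only genuine sign-check is that the central block $b_0$ contributes with the correct sign to $-\Delta$; since the bound is expressed in absolute values there is nothing to lose, and the proof reduces to the symmetric counterpart of Lemma \ref{lemmaDeltaUpper}.
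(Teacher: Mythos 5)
Your plan to mirror Lemma \ref{lemmaDeltaUpper} is the right instinct --- the paper's proof of this lemma is indeed structurally parallel --- but the step you lean on to make the mirroring work is a non sequitur. In the upper-bound proof, the conditional probability $\Pr_{l+d}\left[h(y_1),\dots,h(y_d)\le RANK_t(h(X)) \mid RANK_t(h(X))\in b_i\right]$ is replaced by $(\tfrac{t}{m})^d(1+\epsilon i)^d$, which is an \emph{upper} bound on it, obtained from the \emph{upper} endpoint of block $b_i$; the resulting inequality $\Delta \le \sum_i \Pr_l\left[\cdot\right]\bigl[(\tfrac{t}{m})^d(1+\epsilon i)^d - \binom{k}{d}/\binom{n}{d}\bigr]$ therefore only goes one way. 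The absolute values are introduced \emph{after} this one-sided substitution, so what that proof controls is $|B|$ for a quantity $B$ satisfying $\Delta\le B$ --- which bounds $\Delta$ from above but says nothing about how negative $\Delta$ can be. Your claim that ``the bound it produced is in fact a bound on $|\Delta|$, and therefore on $-\Delta$ as well'' is therefore false, and with it the reduction of the lower bound to the upper bound collapses.

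What the paper actually does for $-\Delta$ is bound the conditional probability from \emph{below}, using the \emph{lower} endpoint of each block: conditioned on $RANK_{k-1}(h(X))\in b_{i+1}$, the rank value is at least $(1+\epsilon i)\tfrac{t|U|}{m}$, so the conditional probability is at least $(\tfrac{t}{m})^2(1+\epsilon i)^2$. This produces an index shift --- the probability of block $b_{i+1}$ is paired with the factor $(1+\epsilon i)^2$ --- after which the same Abel-summation/telescoping step and the tail bounds of Lemmas \ref{lemma3} and \ref{lemma4} go through, now with denominators $|i+1|^3$. The rest of your outline (splitting into two tails plus the two central blocks, the convergent tail sum, and the $O(\epsilon)$ central terms) is consistent with the paper once this correction is made, but as written your argument does not establish the stated lower bound.
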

\begin{proof}

\begin{equation}
\sum_{i=-\infty}^{\infty} \Pr_{l+d}\left[ RANK_{k-1}(h(X)) \in b_i\right] \times \nonumber
\end{equation}
\begin{equation}
\left[ \Pr_{l+d}\left[ h(y_1), h(y_2) \le RANK_{k-1}(h(X)) \ | \ RANK_{k-1}(h(X)) \in b_i  \right] - \frac{k}{n}\frac{k-1}{n-1} \right] \ge \nonumber
\end{equation}

We now use the lower part of the block to bound probability. 
Using $2$ independent (out of $l+d$) for $h(y_1), h(y_2),\dots,h(y_d)$, we get

\begin{equation}
\sum_{i=-\infty}^{\infty} \Pr_l\left[ RANK_{k-1}(h(X)) \in b_{i+1}\right] \left[ (\frac{k-1}{n-2})^2(1+\epsilon i)^2-\frac{k}{n}\frac{k-1}{n-1} \right] = \nonumber
\end{equation}

\begin{equation}
\sum_{i=-\infty}^{-2} \Pr_l\left[ RANK_{k-1}(h(X)) \in b_{i+1}\right] \left[ (\frac{k-1}{n-2})^2(1+\epsilon i)^2-\frac{k}{n}\frac{k-1}{n-1} \right] + \nonumber
\end{equation}
\begin{equation}
\Pr_l\left[ RANK_{k-1}(h(X)) \in b_0\right] \left[ (\frac{k-1}{n-2})^2(1-\epsilon)^2-\frac{k}{n}\frac{k-1}{n-1} \right] + \nonumber
\end{equation}
\begin{equation}
\Pr_l\left[ RANK_{k-1}(h(X)) \in b_1\right] \left[ (\frac{k-1}{n-2})^2-\frac{k}{n}\frac{k-1}{n-1} \right] + \nonumber
\end{equation}
\begin{equation}
\sum_{i=1}^{\infty} \Pr_l\left[ RANK_{k-1}(h(X)) \in b_{i+1}\right] \left[ (\frac{k-1}{n-2})^2(1+\epsilon i)^2-\frac{k}{n}\frac{k-1}{n-1} \right] = \nonumber
\end{equation}

By changing the order we get a telescoping sum as follows:
\begin{equation}
\sum_{i=-\infty}^{-2} \Pr_l\left[ RANK_{k-1}(h(X)) \in b_{\le i+1}\right] \left[ (\frac{k-1}{n-2})^2(1+\epsilon i)^2- (\frac{k-1}{n-2})^2(1+\epsilon (i+1))^2\right] + \nonumber
\end{equation}
\begin{equation}
\Pr_l\left[ RANK_{k-1}(h(X)) \in b_{\le0}\right] \left[ (\frac{k-1}{n-2})^2(1-\epsilon)^2-\frac{k}{n}\frac{k-1}{n-1} \right] + \nonumber
\end{equation}
\begin{equation}
\Pr_l\left[ RANK_{k-1}(h(X)) \in b_{\ge1}\right] \left[ (\frac{k-1}{n-2})^2-\frac{k}{n}\frac{k-1}{n-1} \right] + \nonumber
\end{equation}
\begin{equation}
\sum_{i=1}^{\infty} \Pr_l\left[ RANK_{k-1}(h(X)) \in b_{\ge i+1}\right] \left[ (\frac{k-1}{n-2})^2(1+\epsilon i)^2- (\frac{k-1}{n-2})^2(1+\epsilon (i-1))^2\right] \ge \nonumber
\end{equation}

applying lemma \ref{lemma3} and lemma \ref{lemma4}
\begin{equation}
-\sum_{i=-\infty}^{-2} \frac{1}{|i+1|^3} |(\frac{k-1}{n-2})^2(1+\epsilon i)^2- (\frac{k-1}{n-2})^2(1+\epsilon (i+1))^2| - \nonumber
\end{equation}
\begin{equation}
|(\frac{k-1}{n-2})^2(1-\epsilon)^2-\frac{k}{n}\frac{k-1}{n-1}| - |(\frac{k-1}{n-2})^2-\frac{k}{n}\frac{k-1}{n-1}| -\nonumber
\end{equation}
\begin{equation}
\sum_{i=1}^{\infty} \frac{1}{(i+1)^3} |(\frac{k-1}{n-2})^2(1+\epsilon i)^2- (\frac{k-1}{n-2})^2(1+\epsilon (i-1))^2| = \nonumber
\end{equation}

\begin{equation}
-\sum_{i=2}^{\infty} \frac{1}{(i+1)^3} |(\frac{k-1}{n-2})^2(\epsilon^2(2i-1) - 2\epsilon))| - \nonumber
\end{equation}
\begin{equation}
|(\frac{k-1}{n-2})^2(1-\epsilon)^2-\frac{k}{n}\frac{k-1}{n-1}| - |(\frac{k-1}{n-2})^2-\frac{k}{n}\frac{k-1}{n-1}| -\nonumber
\end{equation}
\begin{equation}
\sum_{i=1}^{\infty} \frac{1}{(i+1)^3} |(\frac{k-1}{n-2})^2(\epsilon^2(2i-1) + 2\epsilon))| \ge \nonumber
\end{equation}

\begin{equation}
-\frac{1}{2}\frac{k}{n}\frac{k-1}{n-1} \sum_{i=2}^{\infty} \frac{1}{(i+1)^3} |\epsilon^2(2(i+1)-1) - 2\epsilon| - \nonumber
\end{equation}
\begin{equation}
|(\frac{k-1}{n-2})^2(1-\epsilon)^2-\frac{k}{n}\frac{k-1}{n-1}| - |(\frac{k-1}{n-2})^2-\frac{k}{n}\frac{k-1}{n-1}| -\nonumber
\end{equation}
\begin{equation}
\frac{1}{2}\frac{k}{n}\frac{k-1}{n-1} \sum_{i=2}^{\infty} \frac{1}{(i+1)^3} |\epsilon^2(2(i+1)-1) + 2\epsilon| \ge \nonumber
\end{equation}

\begin{equation}
-c\frac{k}{n}\frac{k-1}{n-1}\epsilon  \nonumber
\end{equation}
\end{proof}
\qed

\begin{lemma}
\label{lemmaIndyk}
Let $Z_j$ be a set of indicator variables, let $Z = \sum_jZ_j$, let $E_i$ be the expected value of $Z$, and let $l>0$ be even. 
\begin{equation}
E(|Z-E_i|^l) \le 8(6l)^{\frac{l+1}{2}}(E_i)^{\frac{l}{2}}\nonumber
\end{equation}
\end{lemma}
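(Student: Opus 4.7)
The plan is to reduce the $l$-th central moment of the indicator sum to a combinatorial sum over set partitions, as in the Bellare--Rompel / Schmidt--Siegel--Srinivasan style moment bound that Indyk employs. First I would center the variables: set $Y_j := Z_j - p_j$ where $p_j = E[Z_j]\in[0,1]$, so $|Y_j|\le 1$, $E[Y_j]=0$, and $E[Y_j^2]=p_j(1-p_j)\le p_j$. Writing $Y:=Z-E_i=\sum_j Y_j$ and expanding by multinomial,
\[
E[Y^l]=\sum_{(j_1,\ldots,j_l)} E[Y_{j_1}Y_{j_2}\cdots Y_{j_l}].
\]

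Next I would invoke $l$-wise independence: since the $Z_j$ are $l$-wise independent (this is the measure $\Pr_l$ under which the lemma is applied in Lemmas \ref{lemma3} and \ref{lemma4}), so are the $Y_j$, hence $E[Y_{j_1}\cdots Y_{j_l}]$ factors across distinct indices. The zero-mean property then kills every term whose index multiset has some value appearing exactly once, so only tuples whose support of size $r$ has every index repeated at least twice survive; in particular $r\le l/2$. For such a surviving term with distinct indices of multiplicities $m_1,\ldots,m_r\ge 2$, the bound $|Y_j|\le 1$ gives $E[Y_{j_s}^{m_s}]\le E[Y_{j_s}^2]\le p_{j_s}$, so each term is bounded by $\prod_s p_{j_s}$.

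Summing the $\prod_s p_{j_s}$ factor over unordered choices of $r$ distinct indices, the elementary symmetric function bound yields $e_r(p_1,p_2,\ldots)\le (\sum_j p_j)^r/r!=E_i^r/r!$. It then remains to count the number of ordered tuples $(j_1,\ldots,j_l)$ that project to a given unordered index set and partition pattern, which is $r!\,S_{\ge 2}(l,r)$ where $S_{\ge 2}(l,r)$ counts set partitions of $[l]$ into $r$ blocks of size at least $2$. Combining,
\[
E[Y^l] \le \sum_{r=1}^{l/2} S_{\ge 2}(l,r)\,E_i^{r}.
\]
Standard bounds on $S_{\ge 2}(l,r)$ (for instance $S_{\ge 2}(l,r)\le \binom{l}{2,2,\ldots,2,\text{rest}}\le l^{l-r}/2^r$, or the cleaner estimate $S_{\ge 2}(l,r)\le (6l)^{(l-r)}$ used in Indyk's analysis) let one geometrically dominate the whole sum by its $r=l/2$ term, provided we pay an overall factor of $8(6l)^{1/2}$ to absorb the sub-leading contributions uniformly.

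The main obstacle is the last step: bookkeeping the partition-count constants so that the final answer is exactly $8(6l)^{(l+1)/2}E_i^{l/2}$ rather than a weaker $(Cl)^{l/2}E_i^{l/2}$. This is routine but tedious, and is the reason the authors quote the lemma from Indyk's paper rather than rederive it; the heart of the argument is the independence-plus-zero-mean cancellation, after which everything else is combinatorial arithmetic on $S_{\ge 2}(l,r)$ and geometric summation in $r$.
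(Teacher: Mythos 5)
Your argument is sound in its main thrust but follows a genuinely different route from the paper's. The paper proves Lemma \ref{lemmaIndyk} by a two-line appeal to Indyk's Lemma 2.2 \cite{Indyk99asmall}: it takes the concentration bound for sums of $l$-wise independent indicators as a black box and integrates its tail by a layer-cake sum, $E(|Z-E_i|^l)\le 2\sum_{j\ge1}j^l\cdot 2e^{-j^2/(2E_i)}$, which after the substitution $s=j/\sqrt{3E_i}$ becomes a Riemann sum for a Gaussian-moment integral, producing the constant $8(6l)^{\frac{l+1}{2}}$. You instead rederive the moment bound from first principles in the Bellare--Rompel/Schmidt--Siegel--Srinivasan style: multinomial expansion, $l$-wise independence plus zero mean to kill every tuple containing a singleton index, $|Y_j|\le1$ to reduce each repeated factor to $E[Y_j^2]\le p_j$, the elementary-symmetric-function estimate, and partition counting, arriving at $E[Y^l]\le\sum_{r\le l/2}S_{\ge2}(l,r)E_i^{\,r}$. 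This is essentially a proof of the primitive that Indyk himself imports, so your version is more self-contained, and it has the additional virtue of making the $l$-wise independence hypothesis explicit --- the paper's lemma statement omits it, and without some independence the claim is simply false (take all $Z_j$ equal to a single Bernoulli variable). One spot in your plan is less ``routine'' than you suggest: the sum $\sum_r S_{\ge2}(l,r)E_i^{\,r}$ is dominated by its $r=l/2$ term only when $E_i\ge1$; for $E_i<1$ the $r=1$ contribution, of order $E_i$, eventually exceeds $C\cdot E_i^{l/2}$ for any constant $C$ and $l\ge 4$, so the inequality actually requires a lower bound on $E_i$ (a caveat that equally afflicts the paper's own statement and its use for the negative blocks of Lemma \ref{lemma4}, where $E_{-i}=t(1-\epsilon i)$ can be arbitrarily small). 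Finally, recovering the exact constant $8(6l)^{\frac{l+1}{2}}$ rather than a generic $(Cl)^{\frac{l+1}{2}}$ by your route would require reproducing Indyk's bookkeeping, but since any such constant serves equally well downstream in Lemmas \ref{lemma3} and \ref{lemma4} (only the threshold on $k$ shifts), that gap is cosmetic.
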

\begin{proof}
The proof is based on Indyk's lemma 2.2 in \cite{Indyk99asmall}, with the following minor change:
\begin{equation}
E(|Z-E_i|^l) \le  2\sum_{j=1}^\infty(j^{l}\cdot2 e^{-\frac{j^2}{2E_i^2}E_i}) \le 4(3E_i)^\frac{l}{2}\sum_{j=1}^\infty(s^{l}e^{-s^2}) \nonumber
\end{equation}
\qed
\end{proof}

\end{document}